\documentclass[a4paper,conference]{IEEEtran}

\IEEEsettopmargin{t}{30mm}
\IEEEquantizetextheight{c}
\IEEEsettextwidth{14mm}{14mm}
\IEEEsetsidemargin{c}{0mm}
\relpenalty=10000
\binoppenalty=10000

\usepackage[utf8]{inputenc}
\usepackage[T1]{fontenc}
\usepackage{url}
\usepackage{ifthen}
\usepackage{cite}
\usepackage[cmex10]{amsmath} %
\interdisplaylinepenalty=1000 %
\usepackage[nolist]{acronym}
\begin{acronym}
\acro{AD}{anchor decoding}
\acro{AGC}[AGC]{automatic gain control}
\acro{API}[API]{application program interface}
\acro{AWGN}[AWGN]{additive white Gaussian noise}
\acro{AVX}[AVX]{advanced vector extensions}

\acro{BCH}{Bose--Chaudhuri--Hocquenghem}
\acro{BDD}{bounded distance decoding}
\acro{BEC}[BEC]{binary erasure channel}
\acro{BEE-PC}{binary message passing based on \ac{EaE} decoding for \acp{PC}}
\acro{BER}[BER]{bit error rate}
\acro{BDMC}[BDMC]{binary \acs{DMC}}
\acro{BI-AWGN}{binary input additive white Gaussian noise}
\acro{BP}[BP]{belief propagation}
\acro{BPSK}{binary phase shift keying} 
\acro{BSC}[BSC]{binary symmetric channel}

\acro{CEL}[CEL]{Communications Engineering Lab}
\acro{CPU}[CPU]{central processing unit}
\acro{DFT}[DFT]{discrete Fourier transform}
\acro{DMC}[DMC]{discrete nemoryless channel}
\acro{DRS}{dynamic reliability score}
\acro{DRSD}{\ac{DRS} decoder}
\acro{dSNR}[dSNR]{design \ac{SNR}}
\acro{DSP}[DSP]{digital signal processing}
\acro{DTP}{decoding transition probability}
\acro{EaE}{error-and-erasure}
\acro{EaED}{\ac{EaE} decoder}
\acro{EEP}{error-evaluator polynomial}
\acro{ELP}{error-locator polynomial}
\acro{EMP}{extrinsic message passing}
\acro{MSP}{minimal stall pattern}
\acro{FEC}[FEC]{forward error correction}
\acro{FER}[FER]{frame error rate}
\acro{FFT}[FFT]{fast Fourier transform}
\acro{GPP}[GPP]{general purpose processor}
\acro{GPC}{generalized \ac{PC}}
\acro{GMD}{generalized minimal distance}
\acro{GMDD}{generalized minimal distance decoder}
\acro{HD}{hard decision}
\acro{HDD}{hard decision decoding}
\acro{HIHO}{hard input hard output}
\acro{HRB}{highly reliable bit}
\acro{iBDD}{iterative \ac{BDD}}
\acro{IDFT}[IDFT]{inverse discrete Fourier transform}
\acro{iEaED}{iterative error-and-erasure decoding}

\acro{KIT}[KIT]{Karlsruhe Institute of Technology}
\acro{LDPC}[LDPC]{low-density parity-check}
\acro{LR}[LR]{likelihood ratio}
\acro{LLR}[LLR]{log-likelihood ratio}
\acro{LTE}[LTE]{Long Term Evolution}
\acro{MAP}[MAP]{maximum a posteriori}
\acro{MC}{miscorrection}
\acro{MD}{miscorrection detection}
\acro{MDS}{maximum distance separable}
\acro{ML}[ML]{maximum likelihood}
\acro{NCG}{net coding gain}
\acro{OSD}{ordered statistics decoding}
\acro{PC}{product code}
\acro{PCM}{parity check matrix}
\acroplural{PCM}[PCMs]{parity check matrices}
\acro{PDF}{probability density function}

\acro{RS}{Reed--Solomon}
\acro{RV}[RV]{random variable}

\acro{SABM}{soft-aided bit marking}
\acro{SA-HDD}{soft-aided \ac{HDD}}
\acro{SABM-SR}{SABM with scaled reliabilities}
\acro{SDD}{soft decision decoding}
\acro{SISO}{soft input soft output}
\acro{SNR}[SNR]{signal-to-noise ratio}
\acro{SPC}[SPC]{single parity check}
\acro{TPD}{turbo product decoding}

\end{acronym}

\newcommand{\yone}{\bm{y}^{(1)}}
\newcommand{\ytwo}{\bm{y}^{(2)}}
\newcommand{\yi}{\bm{y}^{(i)}}
\newcommand{\wone}{\bm{w}^{(1)}}
\newcommand{\wtwo}{\bm{w}^{(2)}}
\newcommand{\wi}{\bm{w}^{(i)}}
\newcommand{\pone}{\bm{p}^{(1)}}
\newcommand{\ptwo}{\bm{p}^{(2)}}

\newcommand{\dmin}{d_{\textnormal{min}}}
\newcommand{\ddesign}{d_{\textup{des}}}
\newcommand{\CW}{\mathcal{C}}

\newcommand{\BDD}[1]{\textsf{BDD}\left(#1\right)}

\newcommand{\wc}{r}

\newcommand{\weight}[1]{\textnormal{wt}\left(#1\right)}
\newcommand{\hdist}[2]{d\left(#1,#2\right)}
\newcommand{\dnE}[1]{d_{\sim{\textnormal{E}(#1)}}}

\newcommand{\que}{\mathord{?}}
\newcommand{\ZQO}{\{0, \que, 1\}}

\newcommand{\pca}{P_{\textnormal{ca}}}

\newcommand{\pwa}{P_{\textnormal{wa}}}
\newcommand{\nca}{n_{\textnormal{ca}}}
\newcommand{\nwa}{n_{\textnormal{wa}}}
\newcommand{\ta}{\mathsf{T}_{\textnormal{a}}}
\newcommand{\te}{\mathsf{T}}

\newcommand{\PBDD}{P_{\textnormal{\textsf{BDD}}}}
\newcommand{\PBDDmc}{P_{\textnormal{\textsf{BDD}}}^{\textnormal{mc}}}
\newcommand{\PBDDfail}{P_{\textnormal{\textsf{BDD}}}^{\textnormal{fail}}}
\newcommand{\PBDDsucc}{P_{\textnormal{\textsf{BDD}}}^{\textnormal{succ}}}

\newcommand{\BDDa}{{\textnormal{\textsf{BDD}}}^{\textsf{a}}}
\newcommand{\PBDDamc}{P_{{\textnormal{\textsf{BDD}}}^{\textsf{a}}}^{\textnormal{mc}}}
\newcommand{\PBDDasucc}{P_{{\textnormal{\textsf{BDD}}}^{\textsf{a}}}^{\textnormal{succ}}}
\newcommand{\PBDDafail}{P_{{\textnormal{\textsf{BDD}}}^{\textsf{a}}}^{\textnormal{fail}}}

\newcommand{\PEaED}{P_{\textnormal{\textsf{EaED}}}}
\newcommand{\PEaEDmc}{P_{{\textnormal{\textsf{EaED}}}}^{\textnormal{mc}}}
\newcommand{\PEaEDfail}{P_{{\textnormal{\textsf{EaED}}}}^{\textnormal{fail}}}
\newcommand{\PEaEDsucc}{P_{{\textnormal{\textsf{EaED}}}}^{\textnormal{succ}}}

\newcommand{\PEaEDmcL}{P_{{\textnormal{\textsf{EaED}}}}^{\textnormal{mc}}}
\newcommand{\PEaEDfailL}{P_{{\textnormal{\textsf{EaED}}}}^{\textnormal{fail}}}
\newcommand{\PEaEDsuccL}{P_{{\textnormal{\textsf{EaED}}}}^{\textnormal{succ}}}

\newcommand{\PEaEDmcM}{P_{{\textnormal{\textsf{EaED}}}}^{\textnormal{mc}}}
\newcommand{\PEaEDfailM}{P_{{\textnormal{\textsf{EaED}}}}^{\textnormal{fail}}}

\newcommand{\EaEDa}{{\textnormal{\textsf{EaED}}}^{\textsf{a}}}
\newcommand{\PEaEDa}{P_{{\textnormal{\textsf{EaED}}}^{\textsf{a}}}}
\newcommand{\PEaEDamc}{P_{\textnormal{\textsf{EaED}}^{\textsf{a}}}^{\textnormal{mc}}}
\newcommand{\PEaEDafail}{P_{{\textnormal{\textsf{EaED}}^{\textsf{a}}}}^{\textnormal{fail}}}
\newcommand{\PEaEDasucc}{P_{{\textnormal{\textsf{EaED}}^{\textsf{a}}}}^{\textnormal{succ}}}

\newcommand{\PEaEDamcL}{P_{{\textnormal{\textsf{EaED}}^{\textsf{a}}}}^{\textnormal{mc}}}
\newcommand{\PEaEDafailL}{P_{{\textnormal{\textsf{EaED}}^{\textsf{a}}}}^{\textnormal{fail}}}
\newcommand{\PEaEDasuccL}{P_{{\textnormal{\textsf{EaED}}^{\textsf{a}}}}^{\textnormal{succ}}}

\newcommand{\PEaEDsamc}{P_{\textnormal{\textsf{EaEDs}}^{\textsf{a}}}^{\textnormal{mc}}}
\newcommand{\PEaEDsa}{P_{\textnormal{\textsf{EaEDs}}^{\textsf{a}}}}
\newcommand{\PEaEDsafail}{P_{\textnormal{\textsf{EaEDs}}^{\textsf{a}}}^{\textnormal{fail}}}

\usepackage{bm}
\usepackage{amsthm} 
\newtheorem{theorem}{Theorem}
\usepackage{tikz}
\usetikzlibrary{positioning}
\usetikzlibrary{decorations.pathreplacing}
\hyphenation{op-tical net-works semi-conduc-tor}
\usepackage[
ruled,
vlined,
boxed, 
linesnumbered, 
commentsnumbered]{algorithm2e}
\usepackage{pifont}
\usepackage{booktabs}
\usepackage{mathtools}
\usepackage{diagbox} 
\usepackage{pgfplots}
\pgfplotsset{compat=newest}
\usepackage{amssymb}
\usepackage[colorlinks=true,bookmarks=false,citecolor=blue,urlcolor=blue,linkcolor=red]{hyperref}

\begin{document}
\title{On the Error Rate of Binary BCH Codes under Error-and-erasure Decoding}

 \author{%
   \IEEEauthorblockN{Sisi Miao, Jonathan~Mandelbaum, Holger Jäkel, and Laurent Schmalen}
   \IEEEauthorblockA{Karlsruhe Institute of Technology (KIT),
Communications Engineering Lab (CEL),
76187 Karlsruhe, Germany\\
Email: {\{\texttt{firstname.lastname@kit.edu\}}}}\vspace{-2em}}

\maketitle

\begin{abstract}
Determining the exact decoding error probability of linear block codes is an interesting problem. For binary BCH codes, McEliece derived methods to estimate the error probability of a simple bounded distance decoding (BDD) for BCH codes. However, BDD falls short in many applications. In this work, we consider error-and-erasure decoding and its improved variants. We derive closed-form expressions for their error probabilities and validate them through simulations. Then, we illustrate their use in assessing concatenated coding schemes.
\end{abstract}

\section{Introduction}
Binary \ac{BCH} codes with \ac{BDD} are widely used in communication systems~\cite{shulinbook}. To meet the growing performance demands, enhancing their coding gain through soft-decision decoding has become important. Decoders such as Chase~II~\cite{chase} and \ac{OSD}~\cite{osd_fossorier_lin_95} can approach \ac{ML} performance but at prohibitive complexity. In this paper, we consider \ac{EaE} decoding, perhaps the simplest soft-decision decoder for \ac{BCH} codes. Apart from the hard-decision values, \ac{EaE} decoding incorporates soft information by adding a single reliability class of erasures (“$\que$”)~\cite{ForneyGMDD}. There exist different variants of \ac{EaE} decoding.
In particular, we focus on the \ac{EaED} constructed from two \ac{BDD} steps~\cite[Sec.~3.8.1]{MoonBook}. This algorithm offers good decoding performance and is closely related to Chase~II decoding. Moreover, with a simple modification, it can reproduce the decoding result of Forney’s algebraic \ac{EaE} decoder~\cite{forney1965decoding,rapp2021error}.

One appealing property of \ac{BDD} is that its decoding error probability can be determined analytically~\cite{mceliece1986decoder}: given the number of input errors, the distribution of residual errors after decoding is known. In contrast, no such results exist yet for \ac{EaE} decoding. In this paper, we derive a closed-form expression for the considered \ac{EaED}. Specifically, we aim to determine the \ac{DTP} \({P(R\!=\!r | U\!=\!u, E\!=\!e)}\), where \(R\) is the number of residual errors after decoding, \(U\) the number of input errors, and \(E\) the number of input erasures. With the \ac{DTP}, the post-decoding \ac{BER} and \ac{FER} for various memoryless channels can be computed. The \ac{DTP} is validated through simulations and applied in a concatenated scheme.

\textbf{Notations}: We use boldface letters to denote vectors, e.g., $\bm{y}$, and $y_i$ denotes its $i$th component. Let  ${[n]:=\{0,1,\ldots,n-1\}}$. Furthermore, with ${\mathbb{F}_2=\{0,1\}}$, we introduce the Hamming weight ${\weight{\bm{y}}=|\{i\in[n]:y_i\neq 0\}|}$ for ${\bm{y}\in\mathbb{F}_2^n}$, the Hamming distance $\hdist{\bm{y}_1}{\bm{y}_2}=\weight{\bm{y}_1-\bm{y}_2}$, and the notation 
${\dnE{\bm{y}}(\bm{c}):=|\{i\in[n]:c_i\neq y_i,y_i\neq \que\}|}$ for ${\bm{c}\in \mathbb{F}_2^n}$ and $\bm{y}\in\{0,1,\que\}$ where ``$\que$'' is an erasure.
\Acp{RV} and their realizations are denoted by uppercase and lowercase letters, respectively, e.g., ${U=u}$.
Furthermore, when clear from the context, we use the shorthand notation $P(r|u)=P(R=r|U=u)$. Finally, $Q(x)\!=\!\tfrac{1}{\sqrt{2\pi}}\int_x^{\infty}\text{e}^{-\frac{u^2}{2}}\mathrm{d}u$ is the Gaussian tail distribution function.

\section{Preliminaries}

Let $\mathcal{C}[n, k, \ddesign]$ be a primitive, narrow-sense, binary \ac{BCH} code,
where $n$ is the block length, $k$ the dimension, and $\ddesign$ the designed distance, 
a lower bound on the minimum distance $\dmin$. 
The code rate is given by $R_c = k/n$. Let $t$ denote the number of correctable errors. The parameters are related by $n = 2^b-1$, $k \geq n - bt$, and $2t+1=\ddesign$.
Let $A_w$ denote the number of codewords of weight $w$ in $\mathcal{C}$. Binary primitive BCH codes have 
approximately binomial weight distributions  
given by \(A_w \approx 2^{-bt} \binom{n}{w}\) for \(d_{\mathrm{des}} \leq w \leq n - d_{\mathrm{des}}\), with \(A_0 = A_n = 1\), and \(A_w = 0\) elsewhere~\cite{shulinbook}.
The exact weight enumerators of some short BCH codes are listed in~\cite{OEISWeightDistributions}. Note that both~\cite{mceliece1986decoder} and the results derived in this work rely on the weight distribution. If only an approximate weight distribution is available, the resulting analysis is also approximate.

We consider three communication channels: the \ac{BSC} with cross-over probability ${\delta=P(Y\neq X)}$, the \ac{EaE} channel, and the \ac{BI-AWGN} channel. 
In the \ac{EaE} channel, the input is ${X\in \mathbb{F}_2}$ while the output is ${Y\in\{0,1,\que\}}$. The channel transition probabilities are ${P(Y=1-x|X=x)=\delta_c}$, ${P(Y=\que|X=x)=\epsilon_c}$, and ${P(Y=x|X=x)=1-\delta_c-\epsilon_c}.$ For the \ac{BI-AWGN} channel, we map bit ${x_i\mapsto(-1)^{x_i}=:\tilde{x}_i}$, and the received symbol is ${\tilde{y}_i = \tilde{x}_i + n_i\in \mathbb{R}}$, where ${n_i\sim\mathcal{N}(0,\sigma_n^2)}$ with $\sigma_n^2 = (2R_cE_b/N_0)^{-1}$.
In the following, by introducing a threshold ${\te\in \mathbb{R}}$, we reduce the \ac{BI-AWGN} to an \ac{EaE} channel: if ${\tilde{y}_i \in [-\te,\te]}$, an erasure is declared, i.e., $y_i=\que$, otherwise ${y_i=\mathrm{sign}(\tilde{y}_i)}$. 
Thus, the parameters of the resulting \ac{EaE} are
$\delta_c =  Q\left(\tfrac{\te+1}{\sigma}\right)$, and $\epsilon_c = 1 - Q\left(\tfrac{\te-1}{\sigma}\right) - \delta_c.$ For $\te=0$, we recover a \ac{BSC} with $\delta = Q\left(\tfrac{1}{\sigma}\right)$.

We use $\bm{x}\in \mathcal{C}$ to denote the transmitted codeword and $\bm{y}$ to denote the received word, whose alphabet is determined by the channel.  At the output of a \ac{BSC}, we have 
\[
\BDD{\bm{y}} =
\begin{cases}
    \bm{c}, & \text{if } \bm{y} \in \mathcal{S}_t(\bm{c}),\\
    \bm{y}\,(\mathsf{fail}), & \text{otherwise},
\end{cases}
\]
where $\mathcal{S}_t(\bm{c})=\{\bm{y}\in \mathbb{F}_2^n:d(\bm{y},\bm{c})\leq t\}$ denote the Hamming sphere of radius $t$ for $\bm{c}\in \mathcal{C}$. The decoding outcome falls into one of three categories:  
(i) decoding success, if $\bm{c} = \bm{x}$;  
(ii) decoding failure if no codeword is found; or  
(iii) miscorrection (also called a decoding error as in~\cite{mceliece1986decoder}), if $\bm{c} \neq \bm{x}$.

For each decoder considered in this paper, we distinguish these three cases when computing the \ac{DTP}. Taking \ac{BDD} as an example, in addition to the \acp{RV} $U$ and $R$, we introduce the \ac{RV} ${D\in\mathcal{D}:=\{\mathsf{succ},\mathsf{fail},\mathsf{mc}\}}$ representing the decoding outcome for success, failure, and miscorrection, respectively. Then, we define the conditional joint probability ${\PBDD(R=r, D=d | U=u)}$, i.e., the probability of observing outcome $d$ with $r$ residual errors given $u$ input errors. For simplicity, we use the shorthand notation $\PBDD^{d}(r|u):={\PBDD(r,d|u)}$, yielding $\PBDDsucc(r|u)$, $\PBDDfail(r|u)$, and $\PBDDmc(r|u)$.
It follows that $$\PBDD(r|u) = \PBDDmc(r|u) + \PBDDfail(r|u) + \PBDDsucc(r|u).$$
Note that for a decoding failure where ${\BDD{\bm{y}} = \bm{y}}$, ${\PBDDfail(R \neq u | u) \! = \!0}$, while for a decoding success, ${\PBDDsucc(R \neq 0 | u) \!=\! 0}$. We further denote by $\PBDDsucc(u)$, $\PBDDfail(u)$, and $\PBDDmc(u)$ the conditional probabilities of success, failure, and miscorrection, respectively, given $u$ errors.

Since we study linear codes with a symmetric decoding algorithm, we assume the transmission of ${\bm{x}=\bm{0}}$ throughout the paper. Then, the decoder input errors are ${u=\weight{\bm{y}}}$.

\section{Analysis of Bounded Distance Decoding}

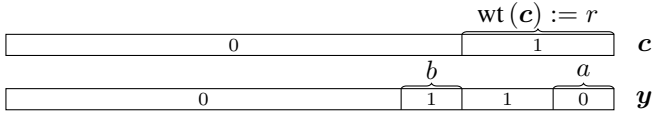
\begin{figure}
    \centering

\begin{tikzpicture}[scale=0.4]
  \filldraw[draw=black, fill=white] (0,1.5) rectangle (20,2.2);   %
  \filldraw[draw=black, fill=white] (0,3.3) rectangle (20,4.0);   %

  \node at (21,1.85) {$\bm{y}$};
  \node at (21,3.65) {$\bm{c}$};

  \draw (15,3.3) -- (15,4.0);
  \draw (15,1.5) -- (15,2.2);
  \draw (13,1.5) -- (13,2.2);
  \draw (18,1.5) -- (18,2.2);
  
  \node[font=\scriptsize] at (7.5,3.65) {$0$};
  \node[font=\scriptsize] at (17.5,3.65) {$1$};
  
  \node[font=\scriptsize] at (6.5,1.85) {$0$};
  \node[font=\scriptsize] at (14,1.85) {$1$};
  \node[font=\scriptsize] at (16.5,1.85) {$1$};
  \node[font=\scriptsize] at (19,1.85) {$0$};

  \draw[decorate,decoration={brace,amplitude=2pt}]
    (15,4.0) -- (20,4.0)
    node[midway,yshift=7pt]{$\weight{\bm{c}}:=r$};
  \draw[decorate,decoration={brace,amplitude=2pt}]
    (13,2.2) -- (15,2.2)
    node[midway,yshift=7pt]{$b$};
  \draw[decorate,decoration={brace,amplitude=2pt}]
    (18,2.2) -- (20,2.2)
    node[midway,yshift=7pt]{$a$};
\end{tikzpicture}
\vspace{-1ex}
    \caption{Graphical illustration of the miscorrection scenario when using a BDD.}
    \vspace{-2ex}
    \label{fig:BDD}
\end{figure}

\label{sec:BDD}

In this section, we consider \ac{BDD} after \ac{BSC} where the received sequence is $\bm{y}\in \mathbb{F}_2^n$, i.e., without erasures.
To this end, we review the method in~\cite{mceliece1986decoder} and refine it slightly such that it also gives the detailed \ac{DTP} for \ac{BDD}. 

\textbf{Case 1:} If $U=u\leq t$, then $\PBDDsucc(0|u)=1$ and $\PBDD^{d}(r|u)=0$ for $d\neq\mathsf{succ},r\neq0$.

\textbf{Case 2:} If ${U=u> t}$, either a miscorrection or a decoding failure may occur. In the miscorrection scenario, we have $\PBDDmc(r | u) = N(u)/\tbinom{n}{u}$, where $N(u)$ is the number of received patterns $\bm{y}$ with $\weight{\bm{y}}=u$ that satisfy $\bm{y} \in \mathcal{S}_t(\bm{c})$ for some $\bm{c} \in \mathcal{C}$ with $\weight{\bm{c}}=r$.  Figure~\ref{fig:BDD} illustrates this scenario with the support of $\bm{c}$ permuted to the right.

We define
${a = |\{i\in[n]:c_i=1,y_i=0\}|}$ as the number of correct bits in $\bm{y}$ that are (wrongly) flipped by \ac{BDD}, and ${b = |\{i\in[n]:c_i=0,y_i=1\}|}$ as the number of erroneous positions in $\bm{y}$ that are (correctly) flipped by \ac{BDD}.
Hence, it holds that ${u+a-b=r}$ and $a+b = \hdist{\bm{y}}{\bm{c}}\leq t$. 
For fixed $a$ and $b$, there are $\binom{r}{a}\binom{n-r}{b}=\binom{r}{a}\binom{n-r}{u+a-r}$ distinct patterns for $\bm{y}$. 
Additionally, the number of codewords $\bm{c}\in\mathcal{C}$ with $\weight{\bm{c}}=r,$ is given by the weight enumerator $A_r$.
Thus, 
    \begin{equation*}
        \PBDDmc(r|u) = \frac{N(u)}{\tbinom{n}{u}} = \frac{\sum_{(a,b)\in \mathcal{S}_{a,b}}  A_r \tbinom{r}{a}\tbinom{n-r}{b}}{\tbinom{n}{u}}.
    \end{equation*}
    where ${\mathcal{S}_{a,b} = \{(a,b) \in [t]\times [t]: a+b \leq t, u+a-b=r\}}$.
Then, as ${\PBDDmc(r|u)\! +\! \PBDDfail(r|u) \!= \!1}$, we have \[\PBDDfail(r|u) = 1-\sum_{\tilde{r}=u-t}^{u+t} \PBDDmc(\tilde{r}|u).\] The results are summarized in Theorem~\ref{theorem:BDD}.

\begin{theorem}\label{theorem:BDD}
The DTP of \ac{BDD} is given as follows. For $u \le t$, $\PBDD(0|u) = 1$, $\PBDD(r|u) = 0$, $(d,r)\neq (\mathsf{succ},0)$.
For $u > t$,
    \begin{equation*}
         \PBDD(r|u) \!\!=\! \!\begin{cases}
            \PBDDmc(r|u), & \!\!\!u-t\!\leq \! r \! \leq \! u+t, \! r\!\neq \!u,\\
            \PBDDmc(r|u)\!+\!\PBDDfail(r|u), &\!\!\! r=u,\\
            0,   & \!\!\!\text{otherwise.}
        \end{cases}
    \end{equation*}
\end{theorem}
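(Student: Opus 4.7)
The plan is to follow the two-case split already hinted at in the section, formalize the combinatorial counting, and then obtain the failure probability by complementation.

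First, I would dispatch the easy case $u\le t$. Since the transmitted codeword is $\bm{x}=\bm{0}$ and $\weight{\bm{y}}=u$, we have $\hdist{\bm{y}}{\bm{x}}=u\le t$, so $\bm{y}\in\mathcal{S}_t(\bm{0})$. Because $\dmin\ge 2t+1$, the Hamming spheres $\mathcal{S}_t(\bm{c})$ for distinct $\bm{c}\in\CW$ are pairwise disjoint, so $\bm{0}$ is the unique codeword within distance $t$ of $\bm{y}$ and $\BDD{\bm{y}}=\bm{0}=\bm{x}$. Thus success occurs with probability one and zero residual errors, proving the first part.

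For $u>t$, I would argue that only miscorrection or failure can occur (success is impossible since $\bm{y}\notin\mathcal{S}_t(\bm{x})$), and that by sphere-disjointness these two events partition the sample space with residual-error structure $r\in\{u-t,\dots,u+t\}\setminus\{u\}$ under miscorrection, and $r=u$ under failure. The central combinatorial step is to compute $\PBDDmc(r|u)$. Conditioned on $u$ input errors, $\bm{y}$ is uniformly distributed on the set of weight-$u$ binary words (by the symmetry of the BSC), so $\PBDDmc(r|u)=N(u,r)/\binom{n}{u}$ where $N(u,r)$ counts those weight-$u$ patterns that lie in $\mathcal{S}_t(\bm{c})$ for some $\bm{c}\in\CW$ with $\weight{\bm{c}}=r$. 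Fixing such a $\bm{c}$ and parameterizing by $a=|\{i:c_i=1,y_i=0\}|$ and $b=|\{i:c_i=0,y_i=1\}|$ as in Figure~\ref{fig:BDD}, the relations $u+a-b=r$ and $a+b=\hdist{\bm{y}}{\bm{c}}\le t$ follow directly. For each admissible $(a,b)$ there are $\binom{r}{a}\binom{n-r}{b}$ patterns, and summing over $(a,b)\in\mathcal{S}_{a,b}$ and weighting by $A_r$ codewords of weight $r$ yields the claimed formula.

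Finally, $\PBDDfail(r|u)$ is handled by observing that a decoding failure leaves $\bm{y}$ unchanged, so $\PBDDfail(r|u)=0$ unless $r=u$, and by summing $\PBDDmc(\tilde r|u)+\PBDDfail(\tilde r|u)$ to one over $\tilde r$. The main subtlety to check is the sphere-disjointness argument: it guarantees that the codeword $\bm{c}$ witnessing a miscorrection is unique, so the counting $A_r\binom{r}{a}\binom{n-r}{b}$ does not double count any received pattern $\bm{y}$. Once this is nailed down, the proof is a bookkeeping of the three disjoint events and their complement, giving exactly the piecewise expression of Theorem~\ref{theorem:BDD}.
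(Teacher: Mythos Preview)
Your proposal is correct and follows essentially the same argument as the paper: the same two-case split on $u\le t$ versus $u>t$, the same combinatorial parameterization via $a$ and $b$ with the constraints $u+a-b=r$ and $a+b\le t$, and the same complementation to obtain $\PBDDfail$. Your explicit invocation of sphere-disjointness to justify uniqueness of the miscorrecting codeword (hence no double-counting in $A_r\binom{r}{a}\binom{n-r}{b}$) is a welcome bit of rigor the paper leaves implicit; one small slip is that miscorrection \emph{can} yield $r=u$ (whenever a weight-$u$ codeword exists and $a=b$), which is why the theorem's middle case retains the $\PBDDmc(u|u)$ term.
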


\section{Error-and-erasure Decoding}

\subsection{Error-and-erasure (EaE) Decoding Algorithm}\label{sec:EaEDalgorithm}

In the following, we revisit the \ac{EaED} algorithm presented in \cite[Algorithm~2]{miao2022JLT}. For a received word $\bm{y} \in \ZQO^n$ with $E$ erasures and $U$ errors, let ${\bm{w} := \textnormal{EaED}(\bm{y})}$. 

In EaED, a pair of random \emph{filling patterns} $(\pone,\ptwo)$ with 
${\pone,\ptwo \in \mathbb{F}_2^{E}}$ and ${\pone+\ptwo=\bm{1}_{E}}$ is constructed, 
by drawing $\pone$ as a length-$E$ Bernoulli($\frac{1}{2}$) vector. The erasures in $\bm{y}$ are then replaced by $\pone$ and $\ptwo$ to form the \emph{test patterns} $\yone, \ytwo \in \mathbb{F}_2^n$. Two \acp{BDD}  yield $\wi := \BDD{\yi}$, $i\in\{1,2\}$. The final output $\bm{w}$ is selected from $\wone$ and $\wtwo$ according to the following rules:

\textbf{Case 1:} If both \ac{BDD} fail, set $\bm{w}=\bm{y}$ and declare a failure. 

\textbf{Case 2:} If $\wi\in \CW$ for exactly one $\wi$, set $\bm{w}=\wi$. 

\textbf{Case 3:} If both BDD succeed, let ${d_i=\dnE{\bm{y}}\left(\wi\right)}$ for ${i\in\{1,2\}}$.
Then,  choose ${\bm{w}=\wone}$ if ${d_1<d_2}$, and ${\bm{w}=\wtwo}$ if $d_2<d_1$. If $d_1 = d_2$, the output $\bm{w}$ is chosen as $\wone$ or $\wtwo$ with equal probability.

\textbf{Remark 1}: To avoid miscorrections, one commonly declares a decoding failure if the number of erasures $E$ exceeds a certain value before decoding.
This can be easily incorporated in the upcoming \ac{DTP} derivation by truncation.

\textbf{Remark 2}: The EaED reverts to a conventional BDD when there are no erasures, i.e., $e=0$.

\textbf{Remark 3}: For simplicity, when EaED declares a decoding failure, each residual erasure in $\bm{y}$ is resolved by an independent Bernoulli($\frac{1}{2}$) trial, such that on average every residual erasure contributes to $\frac{1}{2}$ bit error.

\subsection{Decoding Transition Probability}\label{sec:eaed_P}

Theorem~\ref{theorem:EaED} outlines our derived  $\PEaED(r |u,e)$ of \ac{EaED}.
\begin{theorem}\label{theorem:EaED}
The \ac{DTP} of \ac{EaED} is given as follows. For ${2u+e<\dmin}$, ${\PEaED(0|u,e)=1}$. For ${2u+e \geq \dmin}$, $\PEaED(r|u,e) = \sum_{d\in\mathcal{D}}\PEaED^d(r|u,e)$ and
\begin{equation}
    \PEaED^d(r|u,e)=\textstyle \sum_{e_1=0}^{e}\PEaED^d(r|u,e,e_1)\cdot P(e_1|u,e),\label{eq:p_dr_ue}
\end{equation}
where ${e_1 = \!\weight{\pone}}$. The value $\PEaED^d(r|u,e,e_1)$ is given by \eqref{eq:probinL} when ${e_1\in[0,t-u]}$. For ${e_1\in[u+e-t,e]}$, we have the symmetry relation ${\PEaED^d(r|u,e,e_1)=\PEaED^d(r|u,e,e\!-\!e_1)}$. For ${e_1\in(t-u,u+e-t)}$, it is approximated by \eqref{eq:caseMfail} and \eqref{eq:caseM}.
\end{theorem}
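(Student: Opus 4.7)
The plan is to decompose the analysis by conditioning on $e_1 := \weight{\bm{p}^{(1)}}$, which fixes the number of errors seen by each of the two BDD calls. Assuming the all-zero codeword is transmitted and recalling $\bm{p}^{(2)} = \bm{1}_e - \bm{p}^{(1)}$, the test patterns $\bm{y}^{(1)}$ and $\bm{y}^{(2)}$ carry $u_1 := u + e_1$ and $u_2 := u + e - e_1$ errors, respectively. Because $\bm{p}^{(1)}$ is a length-$e$ Bernoulli$(1/2)$ vector, $P(e_1|u,e) = \binom{e}{e_1}2^{-e}$, giving the outer sum of the theorem; swapping $\bm{p}^{(1)} \leftrightarrow \bm{p}^{(2)}$ yields the symmetry $\PEaED^d(r|u,e,e_1) = \PEaED^d(r|u,e,e-e_1)$, since the EaED selection rule is symmetric in its two inputs and the joint law of $(\bm{y}^{(1)},\bm{y}^{(2)})$ is invariant under the swap.

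\textbf{Trivial case $2u+e<\dmin$.} One verifies that $\min(u_1,u_2)\le t$, so the corresponding BDD returns $\bm{0}$. For any competing codeword $\bm{c}\neq\bm{0}$ possibly produced by the other BDD, the Hamming triangle inequality restricted to the $n-e$ non-erasure coordinates gives $\dnE{\bm{y}}(\bm{c}) + u \ge \weight{\bm{c}} - a_e \ge \dmin - e$, where $a_e\le e$ denotes the weight of $\bm{c}$ on the erasure positions. Hence $\dnE{\bm{y}}(\bm{c}) > u = \dnE{\bm{y}}(\bm{0})$ under $2u+e<\dmin$, so Case~3 strictly prefers $\bm{0}$. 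Cases~1 and~2 trivially output $\bm{0}$ as well, yielding $\PEaED(0|u,e)=1$.

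\textbf{Main case $2u+e \ge \dmin$.} Partition $e_1 \in \{0,\ldots,e\}$ into three ranges. In Region~L, $e_1 \in [0, t-u]$, we have $u_1 \leq t$, so BDD$^{(1)}$ necessarily returns $\bm{w}^{(1)}=\bm{0}$, while BDD$^{(2)}$ has $u_2 > t$ and is analyzed via Theorem~\ref{theorem:BDD}. If BDD$^{(2)}$ fails, Case~2 outputs $\bm{0}$ and contributes to $\PEaEDfail(0|u,e,e_1)$; if it returns a codeword of weight $r$, Case~3 compares $d_2=\dnE{\bm{y}}(\bm{c})$ with $d_1=u$, so the $(a,b)$-enumeration of Figure~\ref{fig:BDD} must be refined by tracking how the support of $\bm{c}$ splits between the $e$ erasure and the $n-e$ non-erasure positions. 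Combining this refined enumeration with $A_r$ and the $\tfrac{1}{2}$ tie-breaking rule produces~\eqref{eq:probinL}. Region~R, $e_1 \in [u+e-t,e]$, is disposed of by the swap symmetry above.

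\textbf{Region~M: the main obstacle.} For $e_1 \in (t-u, u+e-t)$ both BDDs have $u_i > t$, and their outcomes are statistically dependent through the $n-e$ non-erasure coordinates on which $\bm{y}^{(1)}$ and $\bm{y}^{(2)}$ agree. An exact analysis would enumerate codeword pairs $(\bm{c}_1,\bm{c}_2)\in\mathcal{C}^2$ simultaneously within BDD radius of the two patterns; by linearity $\bm{c}_1+\bm{c}_2\in\mathcal{C}$ has support confined to the erasure coordinates, leading to an intricate joint weight enumeration that admits no clean closed form. The plan is to approximate by treating the two BDD outcomes as conditionally independent given $(u,e,e_1)$, apply Theorem~\ref{theorem:BDD} to each marginal with error count $u_i$, and combine via the EaED selection rule: the ``at least one BDD fails'' sub-case reduces to the marginal BDD failure probability of the surviving pattern, producing~\eqref{eq:caseMfail}, whereas the ``both miscorrect'' sub-case requires the same erasure versus non-erasure support split as Region~L applied to both candidate codewords so that $d_1$ and $d_2$ can be compared, producing~\eqref{eq:caseM}. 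The accuracy of the independence heuristic is then to be validated by simulation.
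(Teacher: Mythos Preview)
Your decomposition matches the paper's proof in Appendix~\ref{appendx:EaED}: condition on $e_1$, exploit the $e_1\leftrightarrow e-e_1$ swap symmetry, do an exact enumeration in Region~L where one BDD is guaranteed to return $\bm{0}$, and fall back on an independence heuristic in Region~M. Your direct triangle-inequality argument for $2u+e<\dmin$ is a nice addition; the paper simply cites~\cite{rapp2021error} there.

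Two slips to fix. First, in Region~L, when $\BDD{\bm{y}^{(2)}}$ fails, Case~2 of the algorithm outputs $\bm{w}^{(1)}=\bm{0}$, which is a decoding \emph{success} ($d=\mathsf{succ}$, $r=0$), not a failure; the paper's~\eqref{eq:probinL} accordingly has $\PEaEDfail\equiv 0$ in this region, and the case-\ding{193} count $L_2$ feeds into $\PEaEDsucc$. Second, your description of what~\eqref{eq:caseM} contains is off: when both BDDs miscorrect in Region~M, the paper does \emph{not} perform the erasure/non-erasure support split to compare $d_1$ and $d_2$; it further approximates the selection by a coin flip (the factors $\tfrac12$), using only the marginal $\PBDDmc(r|u+e_i)$ from Theorem~\ref{theorem:BDD}. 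As a minor aside, your claim that $\bm{c}_1+\bm{c}_2$ has support confined to the erasure coordinates is false; all the geometry gives is $d(\bm{c}_1,\bm{c}_2)\le 2t+e$.
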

\begin{proof}
    See Appendix~\ref{appendx:EaED}.
\end{proof}

Note that, due to the Bernoulli trials to resolve residual erasures upon decoding failure (Remark~3), $R$ can now take values in $\{0, \tfrac{1}{2}, 1, \tfrac{3}{2}, \dots, n\}$.

\begin{table}[t]    
\caption{Probability of success of EaED with $u$ errors and $e$ erasures for the $[255,239,5]$ primitive BCH code.}
\label{tab:PsuccEaED}
\centering
\setlength{\tabcolsep}{3pt}
\begin{tabular}{c|ccccccccc}
\toprule
\diagbox[height=1.5em]{$u$}{$e$} &0& 1 & 2 & 3 & 4 & 5 & 6 & 7 & 8 \\
\midrule
0 & 1.000 & 1.000 & 1.000 & 1.000 & 1.000 & 0.999992 & 0.688 & 0.453 & 0.289 \\
1 & 1.000 & 1.000 & 1.000 & 0.998 & 0.622 & 0.371 & 0.216 & 0.123 & 0.069 \\
2 & 1.000 & 0.753 & 0.376 & 0.186 & 0.093 & 0.046 & 0.023 & 0.012 & 0.006 \\
\bottomrule
\end{tabular}
\end{table}

\begin{table}[t]
\caption{Probability of miscorrection of EaED with $u$ errors and $e$ erasures for the $[255,239,5]$ primitive BCH code.}
\label{tab:PmissEaED}
\centering
\setlength{\tabcolsep}{3pt}
\begin{tabular}{c|ccccccccc}
\toprule
\diagbox[height=1.5em]{$u$}{$e$} &0& 1 & 2 & 3 & 4 & 5 & 6 & 7 & 8 \\
\midrule
0 & 0 & 0 & 0 & 0 & 0 & 7.8e-06 & 0.233 & 0.407 & 0.530 \\
1 & 0 & 0 & 0 & 0.002 & 0.282 & 0.469 & 0.585 & 0.655 & 0.695 \\
2 & 0 & 0.247 & 0.497 & 0.622 & 0.684 & 0.716 & 0.732 & 0.740 & 0.744 \\
3 & 0.494 & 0.744 & 0.745 & 0.746 & 0.747 & 0.747 & 0.748 & 0.748 & 0.748 \\
4 & 0.494 & 0.746 & 0.747 & 0.748 & 0.748 & 0.748 & 0.748 & 0.748 & 0.748 \\
5 & 0.498 & 0.748 & 0.748 & 0.748 & 0.748 & 0.748 & 0.748 & 0.748 & 0.748 \\
\bottomrule
\end{tabular}
\end{table}

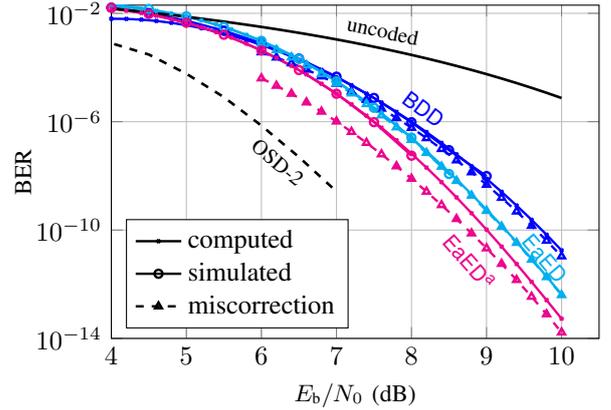
\begin{figure}[t]
    \centering
    \begin{tikzpicture}
\begin{axis}[
    width=8cm,
    height=6cm,
    grid=both,
    xlabel={$E_{\text{b}}/N_0$ (dB)},
    label style={font=\small},
    ylabel style={yshift=-0.18cm},
    ylabel={BER},
    ymode=log,
    legend pos=south west,
    legend cell align={left},
    ymax = 2e-2,
    ymin = 1e-14,
    xmin=4,
    xmax=10.5
]

\addplot [
  color=black,
  line width=0.9pt,
  mark=x,
  mark size=0.9pt,
  mark options={solid}
]
table {
10.000000 1e-18
};
\addlegendentry{computed}

\addplot [
  color=black,
  line width=0.9pt,
  mark=o,
  mark size=1.5pt,
  mark options={solid}
]
table {
10.000000 1e-18
};
\addlegendentry{simulated}

\addplot [
  color=black,
  line width=0.9pt,
  mark=triangle,
  dashed,
  mark size=1.5pt,
  mark options={solid}
]
table {
10.000000 1e-18
};
\addlegendentry{miscorrection}

\addplot [
  color=blue,
  line width=0.9pt,
  mark=x,
  mark size=0.9pt,
  mark options={solid}
]
table {
4.000000 6.301635e-03
4.200000 6.218244e-03
4.400000 5.842282e-03
4.600000 5.231965e-03
4.800000 4.470321e-03
5.000000 3.647391e-03
5.200000 2.844009e-03
5.400000 2.120724e-03
5.600000 1.513260e-03
5.800000 1.033881e-03
6.000000 6.766906e-04
6.200000 4.245142e-04
6.400000 2.553755e-04
6.600000 1.473771e-04
6.800000 8.161898e-05
7.000000 4.338795e-05
7.200000 2.214212e-05
7.400000 1.084768e-05
7.600000 5.100936e-06
7.800000 2.301499e-06
8.000000 9.958465e-07
8.200000 4.129372e-07
8.400000 1.639417e-07
8.600000 6.224811e-08
8.800000 2.257496e-08
9.000000 7.808003e-09
9.200000 2.571176e-09
9.400000 8.046174e-10
9.600000 2.387939e-10
9.800000 6.706082e-11
10.000000 1.777828e-11
};

\addplot [
  color=blue,
  line width=0.9pt,
  mark=o,
  mark size=1.5pt,
  mark options={solid}
]
table {
4 0.0162452
4.5 0.010157
5 0.00535121
5.5 0.00229383
6 0.000786534
6.5 0.000214438
7 4.59055e-05
7.5 7.52667e-06
8 9.54118e-07
8.5 8.90196e-08
9 9.80392e-09
}node [pos=0.7,anchor=south,font=\footnotesize,sloped] {\textsf{BDD}};

\addplot [
  color=blue,
  line width=0.9pt,
  mark=triangle,
  dashed,
  mark size=1.5pt,
  mark options={solid}
]
table {
6.000000 3.657121e-04
6.200000 2.344468e-04
6.400000 1.435976e-04
6.600000 8.411066e-05
6.800000 4.715250e-05
7.000000 2.531593e-05
7.200000 1.302360e-05
7.400000 6.421678e-06
7.600000 3.035225e-06
7.800000 1.375028e-06
8.000000 5.968550e-07
8.200000 2.480989e-07
8.400000 9.868353e-08
8.600000 3.752297e-08
8.800000 1.362251e-08
9.000000 4.715289e-09
9.200000 1.553622e-09
9.400000 4.863826e-10
9.600000 1.443892e-10
9.800000 4.055694e-11
10.000000 1.075334e-11
};

\addplot [
  color=cyan,
  line width=0.9pt,
  mark=x,
  mark size=0.9pt,
  mark options={solid}
]
table {
4.000000 1.825301e-02
4.200000 1.661692e-02
4.400000 1.453128e-02
4.600000 1.218998e-02
4.800000 9.792818e-03
5.000000 7.519804e-03
5.200000 5.508999e-03
5.400000 3.843279e-03
5.600000 2.548813e-03
5.800000 1.604325e-03
6.000000 9.570996e-04
6.200000 5.405279e-04
6.400000 2.887108e-04
6.600000 1.457435e-04
6.800000 6.950394e-05
7.000000 3.130829e-05
7.200000 1.332302e-05
7.400000 5.358466e-06
7.600000 2.038542e-06
7.800000 7.344149e-07
8.000000 2.509360e-07
8.200000 8.146787e-08
8.400000 2.518386e-08
8.600000 7.428802e-09
8.800000 2.095308e-09
9.000000 5.659236e-10
9.200000 1.464584e-10
9.400000 3.629724e-11
9.600000 8.598516e-12
9.800000 1.940944e-12
10.000000 4.157751e-13
}node [pos=0.9,anchor=south,font=\footnotesize,sloped, text = cyan, yshift=-0.1cm] {\textsf{EaED}};

\addplot [
  color=cyan,
  line width=0.9pt,
  mark=o,
  mark size=1.5pt,
  mark options={solid}
]
table {
4 0.0222144
4.5 0.0144888
5 0.00772447
5.5 0.00317196
6 0.000960768
6.5 0.000207601
7 3.10398e-05
7.5 3.26804e-06
8 2.57059e-07
8.5 1.17647e-08
};

\addplot [
  color=cyan,
  line width=0.9pt,
  mark=triangle,
  dashed,
  mark size=1.5pt,
  mark options={solid}
]
table {
6.000000 7.811834e-04
6.200000 4.422113e-04
6.400000 2.368033e-04
6.600000 1.198829e-04
6.800000 5.735769e-05
7.000000 2.593429e-05
7.200000 1.108473e-05
7.400000 4.481354e-06
7.600000 1.715312e-06
7.800000 6.224388e-07
8.000000 2.144800e-07
8.200000 7.031550e-08
8.400000 2.197823e-08
8.600000 6.562835e-09
8.800000 1.875334e-09
9.000000 5.133256e-10
9.200000 1.346006e-10
9.400000 3.377078e-11
9.600000 8.088208e-12
9.800000 1.842878e-12
10.000000 3.977820e-13
};

\addplot [
  color=magenta,
  line width=0.9pt,
  mark=x,
  mark size=0.9pt,
  mark options={solid}
]
table {
4.000000 1.469172e-02
4.200000 1.255131e-02
4.400000 1.031028e-02
4.600000 8.134703e-03
4.800000 6.154316e-03
5.000000 4.455806e-03
5.200000 3.080927e-03
5.400000 2.030356e-03
5.600000 1.272908e-03
5.800000 7.579661e-04
6.000000 4.280852e-04
6.200000 2.290575e-04
6.400000 1.160114e-04
6.600000 5.557744e-05
6.800000 2.517220e-05
7.000000 1.077515e-05
7.200000 4.358368e-06
7.400000 1.665709e-06
7.600000 6.015635e-07
7.800000 2.053351e-07
8.000000 6.626870e-08
8.200000 2.023372e-08
8.400000 5.849932e-09
8.600000 1.603596e-09
8.800000 4.175451e-10
9.000000 1.035308e-10
9.200000 2.452625e-11
9.400000 5.574178e-12
9.600000 1.221014e-12
9.800000 2.590349e-13
10.000000 5.345759e-14
}node [pos=0.75,anchor=north,font=\footnotesize,sloped, text = magenta, yshift=-0.2cm] {$\EaEDa$};

\addplot [
  color=magenta,
  line width=0.9pt,
  mark=o,
  mark size=1.5pt,
  mark options={solid}
]
table {
4 0.016003
4.5 0.00947019
5 0.00449967
5.5 0.00162124
6 0.000429778
6.5 8.05565e-05
7 1.07651e-05
7.5 9.61373e-07
8 5.56863e-08
};

\addplot [
  color=magenta,
  line width=0.9pt,
  mark=triangle,
  dashed,
  mark size=1.5pt,
  mark options={solid}
]
table {
6.000000 3.998556e-05
6.200000 2.129299e-05
6.400000 1.075904e-05
6.600000 5.160927e-06
6.800000 2.352291e-06
7.000000 1.020044e-06
7.200000 4.215075e-07
7.400000 1.662796e-07
7.600000 6.273924e-08
7.800000 2.268134e-08
8.000000 7.867435e-09
8.200000 2.620476e-09
8.400000 8.381444e-10
8.600000 2.572046e-10
8.800000 7.559906e-11
9.000000 2.123071e-11
9.200000 5.679487e-12
9.400000 1.442351e-12
9.600000 3.464792e-13
9.800000 7.843569e-14
10.000000 1.667074e-14
};
\addplot [
  color=black,
  line width=1pt,
  mark=none
]
table {
4.000000 1.500643e-02
4.200000 1.319362e-02
4.400000 1.153764e-02
4.600000 1.003299e-02
4.800000 8.673466e-03
5.000000 7.452255e-03
5.200000 6.361998e-03
5.400000 5.394889e-03
5.600000 4.542771e-03
5.800000 3.797245e-03
6.000000 3.149771e-03
6.200000 2.591782e-03
6.400000 2.114784e-03
6.600000 1.710461e-03
6.800000 1.370766e-03
7.000000 1.088006e-03
7.200000 8.549174e-04
7.400000 6.647195e-04
7.600000 5.111659e-04
7.800000 3.885724e-04
8.000000 2.918339e-04
8.200000 2.164257e-04
8.400000 1.583930e-04
8.600000 1.143271e-04
8.800000 8.133320e-05
9.000000 5.698999e-05
9.200000 3.930372e-05
9.400000 2.665943e-05
9.600000 1.777104e-05
9.800000 1.163231e-05
10.000000 7.470346e-06
}node [pos=0.5,anchor=south,font=\footnotesize,sloped] {uncoded};

\addplot [
  color=black,
  dashed,
  line width=1pt,
  mark=none
]
table {
 3.00  3.087849e-03
 3.50  2.154708e-03
 4.00  7.898426e-04
 4.50  3.009646e-04
 5.00  5.895766e-05
 5.50  8.055728e-06
 6.00  7.157440e-07
 6.50  4.962827e-08
 7.00  2.653639e-09
}node [pos=0.75,anchor=north,font=\footnotesize,sloped] {OSD-2};

\end{axis}
\end{tikzpicture}
    \caption{Simulated and computed \ac{BER} results of the  $[255,239,5]$ primitive BCH code. The $\mathsf{EaED}$ uses the parameter $\te = 0.16$ and the $\EaEDa$ uses the parameters $\te = 0.13$ and $\ta=0.75$. The \ac{OSD} uses order~2, as higher orders do not yield additional performance gains.}
    \label{fig:oneshot}
\end{figure}

Using the $[255,239,5]$ BCH code as an example, Tables~\ref{tab:PsuccEaED} and~\ref{tab:PmissEaED} list the conditional probabilities $\PEaEDsucc(u,e)$ and $\PEaEDmc(u,e)$, respectively. These quantities represent the probabilities of successful decoding and miscorrection given $u$ errors and $e$ erasures, computed using~\eqref{eq:p_dr_ue} and marginalized over $R$. Note that if $e=0$, Theorem~\ref{theorem:EaED} reverts to Theorem~\ref{theorem:BDD}.
Accordingly, the first column of both tables corresponds to BDD. 
The miscorrection rate of BDD is high and is close to the upper bound $\frac{1}{t!}$ derived in~\cite{mceliece1986decoder}.
Compared with BDD, EaED can decode beyond the minimum distance. 
When $2u+e$ slightly exceeds $\dmin-1$, the decoding success rate of \ac{EaED} is still high. However, this comes at the expense of an increased probability of miscorrection for larger values of $2u+e$. The post-decoding \acp{BER} of EaED and BDD are compared in Fig.~\ref{fig:oneshot} with the threshold $\te$ optimized to minimize the \ac{BER} of EaED at $9$ dB. We see that the simulation and computed values closely agree. Additionally, as expected, EaED improves upon BDD in terms of performance.

\section{Miscorrection Reduction with Anchor Bits}

Miscorrections cause undetectable errors that can be more harmful than decoding failures, e.g., when retransmission is allowed upon decoding failure, or during iterative decoding of product codes with BCH component codes, where miscorrections lead to error propagation.

To mitigate miscorrections, we consider \emph{anchor bits}~\cite{hager2018approaching,miao2022JLT}, which form an additional reliability class. In contrast to erasures, anchor bits are assumed to be highly reliable, and any decoding decision that conflicts with them is treated as a miscorrection. Let $\mathcal{H}_{\mathsf{a}} \subseteq [n]$ denote the indices of anchor bits. For a decoding result $\bm{w} = \mathsf{Dec}(\bm{y})$ with $\bm{w} \in \mathcal{C}$, we perform a \ac{MD} step: if there exists $i \in \mathcal{H}_{\mathsf{a}}$ such that $w_i \neq y_i$, the estimate $\bm{w}$ is rejected and the decoder outputs a failure. It remains to define the set of anchor bits $\mathcal{H}_{\mathsf{a}}$. In this work, anchor bits are obtained by applying a threshold $\ta$ to the \ac{BI-AWGN} channel output, i.e., by marking bits with output value $|\tilde{y}_i|>\ta$. Therefore, $\mathcal{H}_{\mathsf{a}} = \{i\in[n]:|\tilde{y}_i|>\ta\}$. Next, we show that the framework summarized in Theorems~\ref{theorem:BDD} and~\ref{theorem:EaED} can be extended to this scenario with minor modifications.

\subsection{BDD with Anchor Bits}
In this section, we present the \ac{DTP} of BDD aided by anchor bits, termed a $\BDDa$ decoder. We define %
\begin{align*}
    &\pca = P(\tilde{Y}>\ta | \tilde{Y}>0,X=1) = Q\!\left(\tfrac{\ta-1}{\sigma}\right)/(1-\delta),\\
    &\pwa = P(\tilde{Y}<-\ta | \tilde{Y}<0,X=1) = Q\!\left(\tfrac{-\ta-1}{\sigma}\right)/\delta
\end{align*}
to be the probability of a correct and erroneous anchor bit, respectively.
Following the analysis of Sec.~\ref{sec:BDD} and the illustration in Fig.~\ref{fig:BDD}, a miscorrection $\BDD{\bm{y}}=\bm{c}$ is rejected by the anchor bits with probability $(1-\pca)^a  (1-\pwa)^b$. Thus,
\[\PBDDamc(r|u) =\sum_{(a,b)\in \mathcal{S}_{a,b}} A_r \tbinom{r}{a}\tbinom{n-r}{b} (1-\pca)^a  (1-\pwa)^b / \tbinom{n}{u} ,\]
where $(1-\pca)^a\ll (1-\pwa)^b$ since typically $\pca \gg \pwa$. 
$\PBDDafail$ and $\PBDDasucc$ can then be calculated easily.

\subsection{EaED with Anchor Bits}

Similar to BDD$^{\mathsf{a}}$, we introduce EaED$^{\mathsf{a}}$, which is an EaED with the \ac{MD} step using anchor bits, whose \ac{DTP} is denoted by $\PEaEDa(r|u,e)$. To be precise, in EaED$^{\mathsf{a}}$, for $\bm{w}^{(i)}=\BDD{\bm{y}^{(i)}}, i\in \{1,2\}$, we check if there exists $j\in \mathcal{H}_{\mathsf{a}}$ such that $w^{(i)}_ j\neq y_j$. If this is true, we declare a decoding failure for $\BDD{\bm{y}^{(i)}}$. Then, we proceed to select the final output as described in Sec.~\ref{sec:EaEDalgorithm}.

Note that we always set $\ta>\te$ to ensure that erasures cannot be anchor bits.
In this case, $\pca$ and $\pwa$ are given as
\begin{align*}
    &\pca = P(\tilde{Y}>\ta | \tilde{Y}>\te,X=1) = Q\left(\tfrac{\ta-1}{\sigma}\right)/(1-\delta_c-\epsilon_c),\\
    &\pwa=P(\tilde{Y}<\ta | \tilde{Y}<\te,X=1)=Q\left(\tfrac{-\ta-1}{\sigma}\right)/\delta_c.
\end{align*}
The \ac{DTP} of  EaED$^{\mathsf{a}}$ is calculated in Appendix~\ref{appendx:EaEDa}.

\subsection{Improving one-shot decoding of BCH code}

As shown in Fig.~\ref{fig:oneshot}, EaED$^{\mathsf{a}}$ outperforms both EaED and BDD. 
Next, we examine the miscorrection rate. 
For EaED, most bit errors are due to miscorrections, whereas BDD exhibits a smaller fraction.
This observation aligns with Tab.~\ref{tab:PsuccEaED} and Tab.~\ref{tab:PmissEaED}, reflecting that EaED has a stronger error correction capability but is more prone to miscorrection.
By contrast, EaED$^{\mathsf{a}}$ substantially reduces miscorrections through the use of anchor bits, which can be seen from the dashed lines in Fig.~\ref{fig:oneshot}.
Thus, EaED$^{\mathsf{a}}$ improves decoding performance while maintaining a low miscorrection rate, keeping the undetectable error rate negligible.

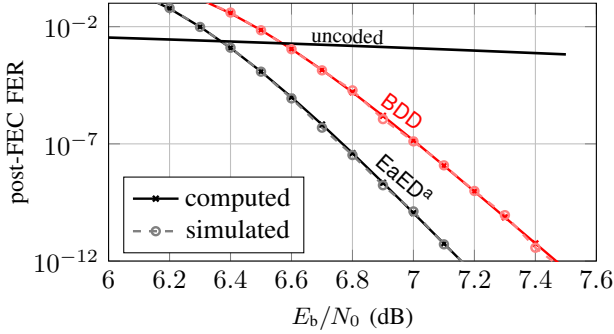
\begin{figure}
    \centering
    \begin{tikzpicture}

\begin{axis}[
    ylabel={post-FEC FER},
    ymode=log,             
    xlabel={$E_{\text{b}}/N_0$ (dB)},
    label style={font=\small},
    ylabel style={yshift=-0.18cm},
    grid=both,
    width=8cm,
    height=5cm,
    legend pos=south west,
    ymax=1e-1,
    ymin = 1e-12,
    xmin = 6,
    xmax=7.6
]
\addlegendentry{computed}
\addlegendentry{simulated}

\addplot [
  color=black,
  line width=0.9pt,
  mark=x,
  mark size=1.5pt,
  mark options={solid},
]
table {
  6.000000  6.269077e-01
  6.100000  2.457441e-01
  6.200000  5.920345e-02
  6.300000  9.893023e-03
  6.400000  1.239258e-03
  6.500000  1.219475e-04
  6.600000  9.798819e-06
  6.700000  6.677286e-07
  6.800000  4.004161e-08
  6.900000  2.187670e-09
  7.000000  1.123448e-10
  7.100000  5.570655e-12
  7.200000  2.726708e-13
  7.300000  1.332268e-14
  7.400000  8.881784e-16
};

\addplot [
  color=black!50,
  line width=0.9pt,
  mark=o,
  mark size=1.5pt,
  mark options={solid},
  dashed
]
table {
6.000000 6.105257e-01
6.100000 2.441153e-01
6.200000 6.254050e-02
6.300000 9.716704e-03
6.400000 1.263699e-03
6.500000 1.201537e-04
6.600000 8.687288e-06
6.700000 4.912309e-07
6.800000 3.333023e-08
6.900000 1.715479e-09
7.000000 1.312044e-10
7.100000 5.177192e-12
7.200000 3.028688e-13
7.300000 8.881784e-15
7.400000 8.881784e-16
}node [pos=0.6,anchor=south,font=\footnotesize,sloped, text = black] {$\EaEDa$};

\addplot [
  color=red,
  line width=0.9pt,
  mark=x,
  mark size=1.5pt,
  mark options={solid},
]
table {
6.000000 9.683738e-01
6.100000 7.789450e-01
6.200000 4.220445e-01
6.300000 1.489278e-01
6.400000 3.724071e-02
6.500000 7.152432e-03
6.600000 1.108871e-03
6.700000 1.435039e-04
6.800000 1.596217e-05
6.900000 1.570210e-06
7.000000 1.404612e-07
7.100000 1.173106e-08
7.200000 9.370638e-10
7.300000 7.312995e-11
7.400000 5.678125e-12
7.500000 4.449774e-13
7.600000 3.552714e-14
7.700000 2.664535e-15
};

\addplot [
  color=red!50,
  line width=0.9pt,
  mark=o,
  mark size=1.5pt,
  mark options={solid},
  dashed
]
table {
6.000000 9.693130e-01
6.100000 7.982895e-01
6.200000 4.355290e-01
6.300000 1.483068e-01
6.400000 4.004526e-02
6.500000 7.013516e-03
6.600000 1.095085e-03
6.700000 1.370160e-04
6.800000 1.918109e-05
6.900000 1.137620e-06
7.000000 1.275335e-07
7.100000 1.215024e-08
7.200000 9.769314e-10
7.300000 9.109069e-11
7.400000 3.713474e-12
7.500000 3.632650e-13
}node [pos=0.5,anchor=south,font=\footnotesize,sloped, text = red] {\textsf{BDD}};

\addplot [
  color=black,
  line width=1pt,
  mark=none
]
table {
6.000000  3.432204e-03
6.100000  3.122284e-03
6.200000  2.834637e-03
6.300000  2.568195e-03
6.400000  2.321902e-03
6.500000  2.094711e-03
6.600000  1.885590e-03
6.700000  1.693525e-03
6.800000  1.517522e-03
6.900000  1.356607e-03
7.000000  1.209834e-03
7.100000  1.076282e-03
7.200000  9.550588e-04
7.300000  8.453044e-04
7.400000  7.461898e-04
7.500000  6.569199e-04
}node [pos=0.5,anchor=south,font=\footnotesize,sloped,yshift=-0.1cm] {uncoded};
\end{axis}

\end{tikzpicture}
    \caption{FER versus crossover probability for eight  RS$[544, 514, 15]$ outer code interleaved with $64$ BCH$[700, 680, 2]$ inner code. The EaED$^{\mathsf{a}}$ uses the parameters $\te = 0.05$ and $\ta=0.56$ which minimizes the FER at $7$ dB.}
    \label{fig:RSBCH}
\end{figure}
 
\section{Application in Concatenated BCH-RS Schemes}
In~\cite{SukmadjiRSBCH}, a concatenated scheme with \ac{RS} outer codes and \ac{BCH} inner codes is presented and thoroughly analyzed. In this section, we show that only replacing the BDD used for the BCH codes in that scheme with EaED$^{\mathsf{a}}$ lowers the post-\ac{FEC} \ac{FER} with only a small increase in complexity. The rest of the scheme follows~\cite{SukmadjiRSBCH}. 

The scheme in~\cite{SukmadjiRSBCH} uses shortened BCH codes to reduce miscorrections and match the target code rate. 
Let ${\mathcal{H}_\mathsf{s}=[n_{\text{short}]}}$ denote the set of shortened positions.
Then, the subcode
$$\mathcal{C}^n_{\text{short}}:=\{\bm{c}\in\mathcal{C}[n,k]:c_i=0,\forall i\in \mathcal{H}_\mathsf{s}\}\subseteq\mathcal{C}[n,k]$$
is isomorphic to the shortened code $\mathcal{C}_{\text{short}}[\,n - n_{\text{short}},\, k - n_{\text{short}}]$ obtained by neglecting positions in $\mathcal{H}_\mathsf{s}$. 
Note that shortened bits are not transmitted, such that $\mathcal{C}_{\text{short}}$ is used for transmission and $\mathcal{H}_{\mathsf{a}}\cap\mathcal{H}_s=\emptyset$. We further assume that the error locations that the BDD declares are randomly distributed, and therefore, the rate that such an error location is in $\mathcal{H}_\mathsf{s}$ is $n_{\text{short}}/n$, which provides a good approximation when $n_{\text{short}}$ is small~\cite{SukmadjiRSBCH}. 
Therefore, we treat the shortened bits as additional correct anchor bits after decoding on $\mathcal{C}$.
We can then compute the \ac{DTP} of EaED$^{\mathsf{a}}$ using the method derived in Appendix~\ref{appendx:EaEDa}. 

Fig.~\ref{fig:RSBCH} shows the computed and simulated post-\ac{FEC} \ac{FER} using the framework in~\cite{SukmadjiRSBCH} and the \ac{DTP} of the \ac{BDD} and \ac{EaED}$^{\mathsf{a}}$.
\Ac{EaED}$^{\mathsf{a}}$ shows a clear decoding gain compared to \ac{BDD}. Regarding the computational complexity, \ac{EaED}$^{\mathsf{a}}$ involves at most two \ac{BDD} decodings.
When constructing the two test patterns, since transmitted codewords are randomly chosen, no random number generator is required, i.e., we can simply set ${\pone=\bm{0}}$ and ${\ptwo=\bm{1}}$. 
After decoding, \ac{MD} requires checking whether bits are anchor bits and, if two candidate codewords exist, comparing their Hamming distances. For memory complexity, additional storage is needed to record the positions of anchor and erasure bits. Overall, the increase in complexity is moderate, since the decoding of the \ac{RS} code, which has comparably high complexity, remains unchanged.

\section{Conclusion}
In this paper, we derive a closed-form expression for estimating the decoding behavior of EaED. Furthermore, results are extended to incorporate an enhanced version of the decoder. The MATLAB scripts used to generate the results are available at \url{https://github.com/kit-cel/EaED_BCH}.

\appendices
\section{EaED Calculation}\label{appendx:EaED}

\begin{table}[t]
    \centering
    \caption{All possible BDD outcomes when decoding the two test patterns in EaED, assuming that $2e+u\geq \dmin$.}
    \begin{tabular}{c c  c |  c c c c c | c c}
    \toprule
         & \ding{192}&\ding{193}&\ding{194}&\ding{195}&\ding{196}&\ding{197}&\ding{198}& \ding{199}&\ding{200} \\
         \hline
         $\BDD{\yone}$& $\bm{0}$ & $\bm{0}$ & $\mathsf{fail}$&$\bm{c}$&$\mathsf{fail}$ & $\bm{c}$&$\bm{c}$& $\bm{c}$&$\mathsf{fail}$ \\
         $\BDD{\ytwo}$& $\bm{c}$&$\mathsf{fail}$ &  $\mathsf{fail}$ & $\mathsf{fail}$& $\bm{c}$ & $\bm{c}$&$\bm{c}'$&$\bm{0}$ & $\bm{0}$\\
         \bottomrule
    \end{tabular}
    \label{tab:BDDoutcomes}
\end{table}

\begin{figure}[t]
    \centering
    \begin{tikzpicture}[scale=0.4]
\draw[-latex, very thick](0,2) -- (21,2);
\draw[-, very thick](5,1.8) -- (5,2.2);
\draw[-, very thick](15,1.8) -- (15,2.2);
\draw[-, very thick](0,1.8) -- (0,2.2);
\draw[-, very thick](20,1.8) -- (20,2.2);

\node at (0,1) {$0$};
\node at (20,1) {$e$};
\node at (5,1) {$\max(t-u,0)$};
\node at (15,1) {$\min(u+e-t,e)$};

\node at (21,3) {$e_1$};

\node at (2.5,3) {$\mathcal{L}:=\{$\ding{192}\ding{193}$\}$};
\node at (10,3) {$\mathcal{M}:=\{$\ding{194}\ding{195}\ding{196}\ding{197}\ding{198}$\}$};
\node at (17.5,3) {$\mathcal{R}:=\{$\ding{199}\ding{200}$\}$};

\end{tikzpicture}
    \caption{Sorting of the cases in Tab.~\ref{tab:BDDoutcomes} according to the value of $e_1$. When $u>t$, sets $\mathcal{L}$ and $\mathcal{R}$ will not occur.}
    \label{fig:EaEDcases}
\end{figure}

\begin{figure}[t]
    \centering
    \begin{tikzpicture}[scale=0.4]

\filldraw[draw=black, fill=white, thick] (0,1) rectangle (20,2);       %
\filldraw[draw=black, fill=white, thick] (0,3.5) rectangle (20,4.5);   %
\filldraw[draw=black, fill=white, thick] (0,6) rectangle (20,7);       %
\filldraw[draw=black, fill=white, thick] (0,8.5) rectangle (20,9.5);   %

\filldraw[draw=black, fill=orange!20] (12,3.5) rectangle (13,4.5);
\filldraw[draw=black, fill=orange!20] (19,3.5) rectangle (20,4.5);
\filldraw[draw=black, fill=green!20] (14,3.5) rectangle (15,4.5);
\filldraw[draw=black, fill=green!20] (17,3.5) rectangle (18,4.5);
\filldraw[draw=black, fill=gray!20] (13,3.5) rectangle (14,4.5);
\filldraw[draw=black, fill=gray!20] (18,3.5) rectangle (19,4.5);

\node[anchor=west] at (20.5,9.0) {$\bm{c}$};
\draw[decorate,decoration={brace,amplitude=2pt}]
(15,9.5) -- (20,9.5)
node[midway,yshift=8pt]{$r\!=\!u\!+\!e\!-\!e_1\!+\!a\!-\!b$};
\draw (15,8.5) -- (15,9.5);
\node at (7.5,9.0) {$0$};
\node at (17.5,9.0) {$1$};

\node[anchor=west] at (20.5,6.5) {$\ytwo$};
\draw (15,6) -- (15,7);
\draw (13,6) -- (13,7);
\draw (18,6) -- (18,7);
\draw[decorate,decoration={brace,amplitude=2pt}]
(13,7) -- (15,7)
node[midway,yshift=8pt]{$b$};
\draw[decorate,decoration={brace,amplitude=2pt}]
(18,7) -- (20,7)
node[midway,yshift=8pt]{$a$};
\draw[decorate,decoration={brace,amplitude=2pt}]
(0,7) -- (13,7)
node[midway,yshift=8pt]{$n-\wc-b$};
\draw[decorate,decoration={brace,amplitude=2pt}]
(15,7) -- (18,7)
node[midway,yshift=8pt]{$\wc - a$};
\node at (6.5,6.5) {$0$};
\node at (14,6.5) {$1$};
\node at (16.5,6.5) {$1$};
\node at (19,6.5) {$0$};

\node[anchor=west] at (20.5,4.0) {$\bm{y}$};
\draw (15,3.5) -- (15,4.5);
\draw (13,3.5) -- (13,4.5);
\draw (18,3.5) -- (18,4.5);
\draw (14,3.5) -- (14,4.5);
\draw (12,3.5) -- (12,4.5);
\draw (19,3.5) -- (19,4.5);
\draw (17,3.5) -- (17,4.5);
\node at (6,4.0) {$0$};
\node at (12.5,4.0) {$\que$};
\node at (13.5,4.0) {$1$};
\node at (14.5,4.0) {$\que$};
\node at (16,4.0) {$1$};
\node at (18.5,4.0) {$0$};
\node at (19.5,4.0) {$\que$};
\node at (17.5,4.0) {$\que$};
\draw[decorate,decoration={brace,amplitude=2pt}]
(17,4.5) -- (18,4.5)
node[midway,yshift=8pt]{$e-e_1-\gamma$};

\node[anchor=west] at (20.5,1.5) {$\yone$};
\draw (15,1) -- (15,2);
\draw (13,1) -- (13,2);
\draw (18,1) -- (18,2);
\draw (14,1) -- (14,2);
\draw (12,1) -- (12,2);
\draw (19,1) -- (19,2);
\draw (17,1) -- (17,2);
\node at (6,1.5) {$0$};
\node at (12.5,1.5) {$1$};
\node at (13.5,1.5) {$1$};
\node at (14.5,1.5) {$0$};
\node at (16,1.5) {$1$};
\node at (18.5,1.5) {$0$};
\node at (19.5,1.5) {$1$};
\node at (17.5,1.5) {$0$};
\draw[decorate,decoration={brace,amplitude=2pt}]
(12,2.0) -- (13,2.0)
node[midway,yshift=8pt]{$\lambda$};
\draw[decorate,decoration={brace,amplitude=2pt}]
(14,2.0) -- (15,2.0)
node[midway,yshift=8pt]{$\gamma$};
\draw[decorate,decoration={brace,amplitude=2pt}]
(19,2) -- (20,2)
node[midway,yshift=8pt]{$e_1-\lambda$};

\end{tikzpicture}
    \vspace{-2ex}
    \caption{Graphical illustration of case \ding{193}. Orange parts sum up to $e_1$ while green parts sum up to $e_2=e-e_1$. The gray parts sum up to $\dnE{\bm{y}}(\bm{c})$.}
    \vspace{-2ex}
    \label{fig:EaEDbarfigure}
\end{figure}
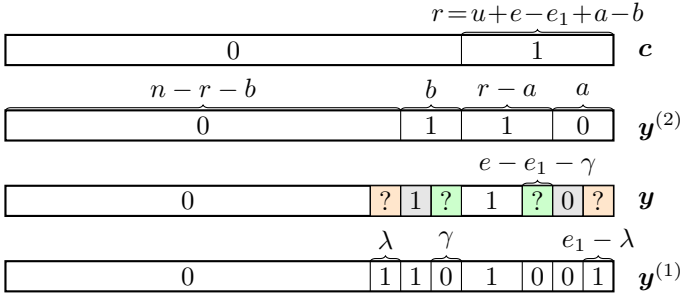
In this appendix, we derive \ac{DTP}
\(
P_\mathsf{EaED}(r|u,e)
\)  of \ac{EaED}. 

\textbf{Case 1}: If $e=0$, EaED reverts to BDD. Therefore, the transition probabilities are given by Theorem~\ref{theorem:BDD}.

\textbf{Case 2}: Consider $e\neq 0$. 

\textbf{Case 2.1}: If $2u+e<\dmin$, $\PEaEDsucc(u,e)=1$~\cite{rapp2021error}.

\textbf{Case 2.2}: When  ${2u+e\geq\dmin}$, EaED may still succeed.
Hence, we must calculate ${\PEaEDmc(r|u,e)}$, ${\PEaEDfail(r|u,e)}$, and ${\PEaEDsucc(r|u,e)}$ separately for this case.

Let the \ac{RV} $E_1=\weight{\pone}$; under the all-zero codeword assumption, $E_1=e_1$ corresponds to erroneous bits introduced by the random filling pattern to $\yone$ before \ac{BDD}, i.e.,
$\weight{\yone} = u+e_1$.
Analogously, define $e_2=\weight{\ptwo}$.
Note that ${E_1 + E_2 = E}$. Consequently, ${\weight{\ytwo} = u+e-e_1}$.

Table~\ref{tab:BDDoutcomes} lists all nine possible joint BDD outcomes.
In particular, note that ${\BDD{\yone}=\BDD{\ytwo}=\bm{0}}$ cannot occur, since this would imply $u+e_1\leq t$ and $u+e-e_1\leq t$, which together yield $2u+e\leq 2t \leq \dmin -1$, contradicting $u+2e\geq\dmin$.
Therefore, at most one of the two \ac{BDD} can correctly decode to $\bm{0}$.

As depicted in Fig.~\ref{fig:EaEDcases}, based on the value of $e_1$, the nine cases can be grouped into three categories (sets), denoted $\mathcal{L}:=\{$\ding{192}, \ding{193}$\}$, $\mathcal{M}=\{$\ding{194}, \ding{195}, \ding{196}, \ding{197}, \ding{198}$\}$, and $\mathcal{R}=\{$\ding{199}, \ding{200}$\}$, corresponding to the cases $e_1\in [0,t-u]$, $e_1\in (t-u,u+e-t)$ and $e_1\in [u+e-t, e]$, respectively. 
Then, we can compute 
\begin{align}    \label{eq:step1}
    &\PEaEDmc(r|u,e)\nonumber
    =\textstyle \sum_{e_1=0}^{e}\PEaEDmc(r|u,e,e_1)\cdot P(E_1=e_1|u,e)\nonumber\\
    \! &= \underbrace{\!2\!\sum_{e_1=0}^{t-u} \PEaEDmcL(r|u,e,e_1)  \frac{\binom{e}{e_1}}{2^e}}_{\mathrm{scenarios }\,\mathcal{L,R}}+\!\!\!\!\! \underbrace{\sum_{e_1=t-u+1}^{u+e-t-1}\!\!\!\!\PEaEDmcM(r|u,e,e_1) \frac{\binom{e}{e_1}}{2^e}}_{\mathrm{scenario}\,\mathcal{M}}.
\end{align}
The factor $2$ before the first sum is due to the symmetry between $\mathcal{L}$ and $\mathcal{R}$. 
Next, we compute the probabilities associated with $\mathcal{L}$ and $\mathcal{M}$.

The set $\mathcal{L}$ contains the joint decoding outputs for which $e_1\leq t-u$, which implies that $\BDD{\yone}=\bm{0}$.
Furthermore, $\weight{\ytwo}=u+e-e_1\geq 2u+e-t\geq \dmin-t >t$.
Therefore, $\BDD{\ytwo}$ is either a miscorrection (case \ding{192}) or decoding failure (case \ding{193}).

There are ${\binom{n}{u}\binom{n-u}{e}\binom{e}{e_1}:=\Theta}$ possible combinations of locations for the $u$ errors, $e$ erasures, and $e_1$ $1$s among the $e$ erased positions when generating the test pattern $\yone$. Then, we first count the number of events leading to \ding{192}, from which \ding{193} follows directly.

Now, for case \ding{192}, we fix a non-zero codeword $\bm{c}$ and first enumerate all possible $\ytwo$. Based on each $\ytwo$, we then infer and count all elementary events $\bm{y}$ consistent with the realization $(u,e,e_1)$.

We now first count the number of $\ytwo$ patterns with weight $u + e_2$ that fulfill $ \ytwo \in \mathcal{S}_t(\bm{c})$ for some ${\bm{c}\in\mathcal{C}\setminus\{\bm{0}\}}$, with $\weight{\bm{c}}=r$.
To this end, as illustrated in Fig.~\ref{fig:EaEDbarfigure} and similarly to \ac{BDD}, we first fix $\ytwo$ and introduce variables ${a = |\{i\in[n]:c_i=1,y^{(2)}_i = 0\}|}$ and ${b = |\{i\in[n]:c_i=0,y^{(2)}_i =1\}|}$. As $\BDD{\ytwo}=\bm{c}$, $(a,b)\in \mathcal{S}_{a,b}=\{(a,b)\in [t]\times  [t]: a+b\leq t, u+e\!-\!e_1+a-b\!=\!r\}$. There are $A_{r}\binom{r}{a}\binom{n-r}{b}$ possible such $\ytwo$ patterns.

Then, to obtain the number of possible $\bm{y}$ patterns with $\ytwo$ fixed, we distribute $e$ erasures across the four different parts of $\ytwo$, which are divided according to the four possible values of $(c_i,y^{(2)}_i)$.
Next, we introduce the variables ${\lambda:=|\{i\in[n]:c_i=0,\, y_i = \que, \,y^{(1)}_i = 1\}|}$ and ${\gamma:=|\{i\in[n]:c_i=0,\, y_i = \que, \,y^{(1)}_i = 0\}}$.
For a fixed pair of $\bm{c}$ and $\ytwo$, there are $\binom{n-r-b}{\lambda}\binom{b}{\gamma}\binom{\wc-a}{e-e_1-\gamma}\binom{a}{e_1-\lambda}$ distinct sequences $\bm{y}$ when considering the possible locations of the erasures.
Observe that whenever $\ytwo$ and $\bm{y}$ are specified, the word $\yone$ is uniquely determined.
Hence, the number of placements for case \ding{192} amounts to 
\begin{equation}
\label{eq:eaedcase2}
    \textstyle\sum_{(a,b)\in \mathcal{S}_{a,b}}\sum_{\lambda\in\mathcal{S}_{\lambda}}\sum_{\gamma\in \mathcal{S}_{\gamma}} V(a,b,\lambda,\gamma)
\end{equation}
where $\mathcal{S}_{\lambda}=\{\lambda\in [n-r-b]:0\leq e_1-\lambda\leq a\}$, $\mathcal{S}_{\gamma}=\{\gamma\in [b]:0\leq e-e_1-\gamma \leq r-a\}$, and $$V(a,b,\lambda,\gamma) = A_{r}\tbinom{r}{a}\tbinom{n-r}{b}\tbinom{n-r-b}{\lambda}\tbinom{b}{\gamma}\tbinom{\wc-a}{e-e_1-\gamma}\tbinom{a}{e_1-\lambda},$$which is abbreviated as $V$ when unambiguous.

Next, for case \ding{192}, we show that evaluating \eqref{eq:eaedcase2} in a refined range of $\lambda$ and $\gamma$ yields the number of elementary cases that lead to different EaED outcomes.

As depicted in Fig.~\ref{fig:EaEDbarfigure}, $\dnE{\bm{y}}(\bm{c}) = b-\gamma+a-(e_1-\lambda)$ and obvisouly $\dnE{\bm{y}}(\bm{0})=u$.
Hence, if ${\dnE{\bm{y}}(\bm{c})<\dnE{\bm{y}}(\bm{0})}$, a miscorrection occurs as \ac{EaED} chooses $\bm{c}$ as its estimate. Similarly, when ${\dnE{\bm{y}}(\bm{c})}>\dnE{\bm{y}}(\bm{0})$, \ac{EaED} chooses $\bm{0}$ as its output and will decode successfully. Finally, When ${\dnE{\bm{y}}(\bm{c})=\dnE{\bm{y}}(\bm{0})}$, miscorrection and success happen with equal probability. 

For convenience, let 
$\mathcal{S}_{\odot}:=\{(\lambda,\gamma):\lambda-\gamma \odot u+e_1-a-b\}$, with $\odot \in \{>,<,=,\leq,\geq\}$ denoting a comparator. Thereby, let
\begin{align*}
    &L_{11}(r) := \textstyle\sum_{(a,b)\in \mathcal{S}_{a,b}}\sum_{(\lambda,\gamma)\in (S_{\lambda}\times \mathcal{S}_{\gamma})\cap \mathcal{S}_{<}}V,\\
    &L_{12}(r) := \textstyle\sum_{(a,b)\in \mathcal{S}_{a,b}}\sum_{(\lambda,\gamma)\in (S_{\lambda}\times \mathcal{S}_{\gamma})\cap \mathcal{S}_{>}}V,\\
    &L_{13}(r) :=\textstyle \sum_{(a,b)\in \mathcal{S}_{a,b}}\sum_{(\lambda,\gamma)\in (S_{\lambda}\times \mathcal{S}_{\gamma})\cap \mathcal{S}_{=}}V.
\end{align*}
Then, the number of elementary events that lead to case \ding{193} is
\begin{equation}
    \label{eq:L1}
    L_2 = \Theta - (L_{11} + L_{12} + L_{13})
\end{equation}
where $L_i = \sum_{r=u+e_1-t}^{u+e_1+t}L_i(r)$ for $i\in \{a,b,c\}$.

In summary, for the cases in $\mathcal{L}$, we have \begin{equation}
    \label{eq:probinL}
    \begin{cases}
        \PEaEDmcL(R=r|u,e,e_1) = (L_{11}(r) + L_{13}(r)/2)/\Theta,\\
    \PEaEDfailL(R=u+\frac{e}{2}|u,e,e_1)=0,\\
    \PEaEDsuccL(R=0|u,e,e_1)= (L_2 + L_{12} + L_{13}/2)/\Theta.
    \end{cases}
\end{equation}

Finally, we consider the cases in $\mathcal{M}$. First, note that they can only cause a miscorrection or a failure since $\yone,\ytwo\notin\mathcal{S}_t({\bm{0}})$. 
In the following, we make a significant simplification by assuming that the outcomes of $\BDD{\yone}$ and $\BDD{\ytwo}$, whether miscorrection or failure, are independent of each other.
In principle, this assumption is false as $d(\yone,\ytwo)=e$. 
However, as the positions of the erasures are random, we observe that fixing the decoding result of one BDD does not significantly influence the other one. Thus,
\begin{equation}\label{eq:caseMfail}
    \PEaEDfailM(u,e,e_1) = \PBDDfail(u\!+\!e_1)\PBDDfail(u\!+\!e\!-\!e_1).
\end{equation}
As for miscorrections, we assume that when both $\yone$ and $\ytwo$ are miscorrected, the probability of choosing one of the miscorrected codewords is $\frac{1}{2}$. Then we can approximate
\begin{align}\label{eq:caseM}
\PEaEDmcM(r&|u,e,e_1) = \PBDDmc(r|u\!+\!e_1)\PBDDfail(u\!+\!e\!-\!e_1) \notag\\
&\quad+ \PBDDfail(u\!+\!e_1)\PBDDmc(r|u\!+\!e\!-\!e_1) \notag\\
&\quad+ \tfrac{1}{2}\PBDDmc(r|u\!+\!e_1)\PBDDmc(u\!+\!e\!-\!e_1) \notag\\
&\quad+ \tfrac{1}{2}\PBDDmc(r|u\!+\!e\!-\!e_1)\PBDDmc(u\!+\!e_1).
\end{align}

Now we have obtained the probabilities for cases in $\mathcal{L}$ and $\mathcal{M}$. Plugging the results into \eqref{eq:step1} and marginalizing over the decoding outcomes $\mathcal{D}$ yields Theorem~\ref{theorem:EaED}.

\section{EaED with Anchor Bits}\label{appendx:EaEDa}

Based on the calculation in Appendix~\ref{appendx:EaED}, the \ac{DTP} of the EaED$^{\mathsf{a}}$ can be obtained by considering that codewords found by $\BDD{\yone}$ and $\BDD{\ytwo}$ could be rejected by the \ac{MD}. 
To this end, let $\bm{z}$ denote one of $\yone$ or $\ytwo$ and ${\BDD{\bm{z}} = \bm{c}}$. 
For ${\bm{c}\in \mathcal{C}}$, define ${\nca:=\{i\in [n] : z_i=0,c_i=1\}}$ and ${\nwa:=\{i\in [n] : z_i=1,c_i=0\}}$.
Then, the probability that $\bm{c}$ is accepted by the \ac{MD} is $(1-\pca)^{\nca}(1-\pwa)^{\nwa}=:p_{\text{AC}}$, since the events of individual bits being anchor bits are statistically independent. Before we derive the $\PEaEDa$, we first compute the \ac{DTP} of decoding one random test pattern, say, $\yone$ when $u+e_1>t$. It is denoted as $\PEaEDsa$.
We define  \[ L_s(r)=\sum_{a,b,\nca,\nwa}A_{r}\tbinom{r}{a}\tbinom{n-r}{b}\tbinom{b}{\nwa}\tbinom{a}{\nca}\tbinom{r-a}{c}\tbinom{n-r-b}{d}\cdot p_{\text{AC}}\] where $c = e_1-(b-\nwa)$ and $d = e-e_1-(a-\nca)$. Then, the miscorrection probability is
\begin{equation}\label{eq:singlepattern}
    \PEaEDsamc(r|u,e,e_1) = L_s(r)/\Theta,
\end{equation}
and $\PEaEDsafail(u,e,e_1) = 1-\PEaEDsamc(u,e,e_1)$ as the success probability is $0$ when $u+e_1>t$.

Next, we analyze the decoding result by cases.

\textbf{Case 1}: For $e=0$, EaED$^{\mathsf{a}}$ reverts to BDD$^{\mathsf{a}}$.

\textbf{Case 2}: For $e\neq 0$, we need to compute two cases. 

\textbf{Case 2.1}: If {$2u+e<\dmin$}, EaED is guaranteed to succeed. However, due to the \ac{MD} step, there are three sub-cases. 

\textbf{Case 2.1.1}: If ${u\!+\!e_1\leq t}$ and ${u\!+\!e\!-\!e_1<\dmin\!-\!t}$ or if ${u\!+\!e_1<\dmin\!-\!t}$ and ${u\!+\!e\!-\!e_1\leq t}$, either $\yone$ or $\ytwo$ will be decoded to $\bm{c}=\bm{0}$ (with $\nwa=u$ and $\nca=0$) and the other one results in a decoding failure.
Let $P_{\bm{0}} :=  (1 - \pwa)^{u}$.
Then \({\PEaEDasucc(u,e) = P_{\bm{0}}}\) and \( {\PEaEDafail(u,e) = 1-P_{\bm{0}}}.\) 

\textbf{Case 2.1.2}: If \(u + e_{1} \leq t\) and \(u + e - e_{1} \geq d_{\min} - t\). ${\BDD{\yone}\!=\!\bm{0}}$, and $\BDD{\ytwo}$ either fails or miscorrects. For a miscorrection where $\BDD{\ytwo}=\bm{c}\neq \bm{0}$ with $\weight{\bm{c}}=r$, $\dnE{\bm{y}}(\bm{0})< \dnE{\bm{y}}(\bm{c})$ always holds, as shown in~\cite[Appendix~A]{rapp2021error}. Therefore, if $\bm{0}$ is accepted by the \ac{MD} (probability $P_{\bm{0}}$), it will be chosen as the EaED decision. Then, ${\PEaEDasucc(u,e)=P_{\bm{0}}}$. When $\bm{0}$ is rejected (probability $(1-P_{\bm{0}})$), EaED will either fail or miscorrect. We compute $\PEaEDamc(r|u,e)\! =\! (1-P_{\bm{0}})\PEaEDsamc\left(r|u,e,e_2\right)$ and
$\PEaEDafail(u,e) = 1-\PEaEDasucc(u,e)-\PEaEDamc(u,e)$.

\textbf{Case 2.1.3}: If \(u + e - e_{1} \leq t\) and \(u + e_{1} \geq d_{\min} - t\),  the probabilities are calculated with the same formulas as in case~2.1.2, by redefining $d_{\bm{y}\bm{c}} = d_{\min} - (u + e_{1})$.

The case where both $\yone$ and $\ytwo$ are miscorrected is impossible as it contradicts the condition $2u+e<\dmin.$ Next, we consider the case where $2u+e\geq \dmin$.

\textbf{Case 2.2}: $2u+e\geq \dmin$. Analogous to Appendix~\ref{appendx:EaED}, we divide the cases into sets $\mathcal{L}$, $\mathcal{M}$, and $\mathcal{R}$, and observe that we can replace the probabilities associated with EaED in \eqref{eq:step1} by their EaED$^{\mathsf{a}}$ counterparts.

For set $\mathcal{L}$, as depicted in Fig.~\ref{fig:EaEDbarfigure}, we write $P_{\bm{0}}=(1-\pwa)^u$ and $P_{\bm{c}}=(1-\pca)^{a - (e_1-\lambda)}  (1-\pwa)^{b - \gamma}$. We can reuse \eqref{eq:probinL}, replacing $L_{11}$, $L_{12}$, and $L_{13}$ by $L_{11}^{\mathsf{a}}$, $L_{12}^{\mathsf{a}}$, and $L_{13}^{\mathsf{a}}$, respectively, which are calculated as
\begin{align*}
&L_{11}^{\mathsf{a}}=\sum_{a,b}\bigg(\sum_{(\lambda,\gamma)\in (S_{\lambda}\times \mathcal{S}_{\gamma})\cap \mathcal{S}_{<}}\!\!\!\!\!\!\!\!\!\!\!\!V \cdot P_{\bm{c}} + \sum_{(\lambda,\gamma)\in (S_{\lambda}\times \mathcal{S}_{\gamma})\cap \mathcal{S}_{\geq}}\!\!\!\!\!\!\!\!\!\!\!\!\!\!V\cdot (1-P_{\bm{0}})P_{\bm{c}}\bigg)\\
&L_{12}^{\mathsf{a}} = \sum_{a,b}\bigg(\sum_{(\lambda,\gamma)\in (S_{\lambda}\times \mathcal{S}_{\gamma})\cap \mathcal{S}_{\leq}}\!\!\!\!\!\!\!\!\!\!\!\!\! V \cdot P_{\bm{0}}(1-P_{\bm{c}})+\sum_{(\lambda,\gamma)\in (S_{\lambda}\times \mathcal{S}_{\gamma})\cap \mathcal{S}_{>}}\!\!\!\!\!\!\!\!\!\!\!\!\!V \cdot P_{\bm{0}}\bigg)\\
&L_{13}^{\mathsf{a}}=\sum_{a,b}\sum_{(\lambda,\gamma)\in (S_{\lambda}\times \mathcal{S}_{\gamma})\cap \mathcal{S}_{=}}\!\!\!\!\!\!\!\!\!\!\!\!\!\!V\cdot  P_{\bm{0}}P_{\bm{c}}
\end{align*}
where $(a,b)\in \mathcal{S}_{a,b}$.
Additionally, define
\[L_{\mathsf{fail}}=\textstyle\sum_{(a,b)\in \mathcal{S}_{a,b}}\sum_{(\lambda,\gamma)\in (S_{\lambda}\times \mathcal{S}_{\gamma})}V(1-P_{\bm{c}})(1-P_{\bm{0}}).\]

Then, we have \begin{equation}
    \label{eq:probinLEaEDa}
    \begin{cases}
        \PEaEDamcL(R=r|u,e,e_1) \!= \!(L_{11}^{\mathsf{a}}(r) + L^{\mathsf{a}}_{13}(r)/2)/\Theta,\\
    \PEaEDafailL(u,e,e_1)=(L_{\mathsf{fail}}+L_2(1-P_{\bm{0}}))/\Theta,\\
    \PEaEDasuccL(u,e,e_1)\!= \!(L_2P_{\bm{0}} + L^{\mathsf{a}}_{12} + L^{\mathsf{a}}_{13}/2)/\Theta.
    \end{cases}
\end{equation}

Finally, we consider the set $\mathcal{M}$, where we still assume that the decoding results of $\yone$ and $\ytwo$ are independent.
 Then, the probabilities are obtained in the same way as for EaED in \eqref{eq:caseM}, except that the BDD probabilities are replaced with those of a single test-pattern EaED computed by~\eqref{eq:singlepattern}. Finally, the DTP of EaED$^{\mathsf{a}}$ can be computed using the framework of Theorem~\ref{theorem:EaED}.

\section*{Acknowledgment}
This work has received funding from the European Research Council (ERC) under the European Union’s Horizon 2020 research and innovation programme (grant agreement No. 101001899).

\bibliographystyle{IEEEtran}
\bibliography{bib}

\end{document}